\newtheorem{lemma}{Lemma}
\newtheorem{corollary}{Corollary}
\newtheorem{theorem}{Theorem}
\newcommand{\St}{{\rm St}}
\newcommand{\Gr}{{\rm Gr}}
\newcommand{\X}{\mathcal{X}}
\newcommand{\Tr}{{\rm Tr}~}
\newcommand{\sk}{{\rm sk}}
\newcommand{\sy}{{\rm sy}}
\begin{document}

\preprint{APS/123-QED}

\title{Many-Body Eigenstates from Quantum Manifold Optimization}

\author{Scott E. Smart}
\email{sesmart@ucla.edu}

\author{Prineha Narang}%
\email{prineha@ucla.edu}
\affiliation{%
College of Letters and Science, University of California, Los Angeles (UCLA), CA, USA.
}%

\date{\today}

\begin{abstract}
Quantum computing offers several new pathways toward finding many-body eigenstates, with variational approaches being some of the most flexible and near-term oriented. These require particular parameterizations of the state, and for solving multiple eigenstates must incorporate orthogonality. In this work, we use techniques from manifold optimization to arrive at solutions of the many-body eigenstate problem via direct minimization over the Stiefel and Grassmannian manifolds, avoiding parameterizations of the states and allowing for multiple eigenstates to be simultaneously calculated. These Riemannian manifolds naturally encode orthogonality constraints and have efficient quantum representations of the states and tangent vectors. We provide example calculations for quantum many-body molecular systems and discuss different pathways for solving the multiple eigenstate problem.
\end{abstract}


\keywords{Manifold optimization, Riemannian Optimization, Quantum Computing, Excited States}
\maketitle

\section{Introduction}
The eigenstate problem is essential for simulating and understanding many-body quantum systems in chemistry and physics, with applications to other forms of state preparation. Over the past decade, several quantum computational approaches have been proposed for preparing the lowest-energy eigenstate\cite{Peruzzo2014,Head-Marsden2020,sajjanQuantumMachineLearning2022,linNearoptimalGroundState2020, Motta2019,mottaEmergingQuantumComputing2022a,  smartQuantumSolverContracted2021}, with predominant near-term approaches being related to the variational quantum eigensolver (VQE)\cite{tillyVariationalQuantumEigensolver2022,Peruzzo2014}. Approaches for finding \emph{many} eigenstates (which we refer to as the many-eigenstate problem) can require iterative constrained optimization, perturbative expansions, or solving a series of secular equations or generalized eigenvalue equation\citep{
higgottVariationalQuantumComputation2018,
smartManyBodyExcitedStates2023, ollitraultQuantumEquationMotion2020,
Huggins2020,
Colless2018}, and a recent line of variational approaches has emphasized that unitary transformations themselves preserve orthogonality\cite{nakanishiSubspacesearchVariationalQuantum2019a, xieOrthogonalStateReduction2022, yalouzStateaveragedOrbitaloptimizedHybrid2021, xuConcurrentQuantumEigensolver2023, hongQuantumParallelizedVariational2023}. A basic element of variational approaches is that the optimization is carried out with respect to a parameterization of the state, and can lead to several challenging issues, such as barren plateaus, high gradient and measurement costs, or spurious local minima due to the complex optimization surface\cite{McClean2018,anschuetzBarrenPlateausQuantum2022, tillyVariationalQuantumEigensolver2022,gonthierMeasurementsRoadblockNearterm2022}.

Vectors and collections of vectors with orthogonality or normalization constraints can be described using manifolds\cite{edelmanGeometryAlgorithmsOrthogonality1998a, Taylor1994}. Riemannian manifolds are differentiable sets that locally resemble Euclidean space and are equipped with a Riemannian metric\cite{boumalIntroductionOptimizationSmooth2023,IntroductionManifolds2008}. Given a problem on a Riemannian manifold, specific tools from manifold optimization
can be used to directly optimize elements of the manifold\cite{boumalIntroductionOptimizationSmooth2023,OptimizationAlgorithmsMatrix2007}. In particular, the Stiefel and Grassmannian manifolds describe sets of orthonormal vectors and have been used in machine learning, principal component analysis, mean-field quantum chemistry, and elsewhere\cite{shepardRepresentationParametrizationOrthogonal2015,
huckemannPrincipalComponentAnalysis2006,
turagaStatisticalAnalysisStiefel2008, edelmanGeometryAlgorithmsOrthogonality1998a,
chiumientoStiefelGrassmannManifolds2012}. Riemannian optimization has also found other recent applications in quantum sciences, including circuit optimization, state and channel tomography, and estimating quantum channel capacities\cite{luchnikovQGOptRiemannianOptimization2021,
wiersemaOptimizingQuantumCircuits2022,
luchnikovRiemannianGeometryAutomatic2021}.

In this work, we cast the subspace minimization and many-eigenstate problems as optimization problems on the Grassmannian and Stiefel manifolds respectively, which can be mapped to problems on quantum states. Using manifold optimization, we directly minimize states on the manifold, bypassing issues of parameterization present in most variational approaches. We provide a brief background on Riemannian optimization and then introduce representations of tangent spaces. We then show how the Stiefel manifold can be represented efficiently as a quantum state, and develop relevant tools for executing Riemannian optimization algorithms. We provide example calculations for many-body molecular systems and give an overview of different approaches for the many-eigenstate problem using classical and quantum resources. The current work establishes a framework for manifold optimization of quantum many-body eigenstate problems using quantum computation. 
\section{Theoretical Background}
We give a brief theoretical background for eigenvalue problems in quantum chemistry and physics, the Stiefel and Grassmannian manifolds, and then tools used in manifold optimization. Throughout, $\mathcal{H}^d$ refers to a finite-dimensional Hilbert space of dimension $d$, $\sk(A)$ and $\sy(A)$ respectively refer to the skew and symmetric portions of a square matrix $A$, $I_d$ refers the identity matrix of rank $d$, and $\otimes$ denotes the Kronecker product of two matrices. 
\subsection{Minimization Over $k-$Frames}
For a many-body quantum system, the Hamiltonian $\hat{H}$ describes all interactions within the system. For many problems, we can often express $\hat{H}$ in a finite-dimensional  Hilbert space, resulting in a Hermitian $n \times n$ matrix $H$. A common example is the many-body fermionic electronic Hamiltonian, which when expressed in the second quantization formulation consists of one- and two-body operators:
\begin{equation*}
\hat{H} = \sum_{pq} {}^1 H_{pq} a^\dagger_{p}a^{}_q + \frac{1}{2} \sum_{pqrs} {}^2 V^{pr}_{qs} a^\dagger_{p} a^\dagger_{r} a^{}_s a^{}_q
\end{equation*}
where $p,q,r,s$ denote sites (orbitals, modes, etc.), $a^\dagger$ and $a^{}$ denote the fermionic creation and annihilation operators, ${}^1 H$ contains the electron$-$nuclei and kinetic integrals, and ${}^2V$ is the repulsive electron$-$electron integrals. The basis consists of all exponential configurations of $N$ electrons in $m$ sites.

In this work, we consider real Hamiltonians (i.e. $H$ is symmetric), though this can be generalized to complex hermitian Hamiltonians that do not obey time-reversal symmetry. The many-body eigenproblem heuristically is to find a set of $k-$vectors which are the $k-$\emph{lowest} eigenvectors of $H$ (stationary solutions of the Schr{\"o}dinger equation). 

A collection of $k$ orthogonal vectors each of length $n$, where $1\leq k \leq n$, describes a $k-$frame (though for convention, we use $p$ instead of $k$ for the manifold dimension). We can write this collection of vectors as a $n \times p$ matrix $X \in \mathbb{R}^{n \times p}$. Then, the many-body eigenvalue problem can be written as:
\begin{equation}\label{eq:prob_st}
    f_{\St}(X) = \frac{1}{2} \Tr X^T H X K 
\end{equation}
where $K$ is a $p \times p$ diagonal matrix with distinct entries, determining their relative orderings. The $X$ that solves $f_S(X)$ has columns corresponding to the $p-$lowest eigenvectors of $H$. If we do not care about the ordering and simply want an invariant subspace of vectors, we can solve:
\begin{equation}\label{eq:prob_gr} 
 f_{\Gr}(X) =   \frac{1}{2} \Tr X^T H X .
 \end{equation} 
A solution $X$ of $f_{\Gr}(X)$ has the same span as a solution of Prob.~\eqref{eq:prob_st}, and is related by a $p\times p$ orthogonal transformation. The minimization problem also can be described as:
\begin{equation}\label{eq:minX}
X^* = \min_X  f(X)~~{\rm s.t.}~~ X^T X  = I_p
\end{equation}
where $I_p$ represents the identity on $\mathcal{H}_p$. Note, if the elements of $K$ are \emph{not} distinct, then we can form separate subspaces based on the degeneracies of the diagonals (e.g., $K$ with diagonal entries $\pm1$ will result in a maximal and minimal subspace).

These two problems represent well-known optimization problems over the Stiefel and Grassmannian manifolds\cite{edelmanGeometryAlgorithmsOrthogonality1998a}, which we discuss in the next section. Conceptually, the Grassmannian manifold represents a subset of the Stiefel manifold, and problems on Grassmannian are typically made as simplifications of the Stiefel manifold. However, computationally the Grassmannian problem, Pr.~\eqref{eq:prob_gr}, often yields simpler tools, and we show later, can be used to solve the Stiefel problem, Pr.~\eqref{eq:prob_st}.

\subsection{Stiefel and Grassmannian Manifolds}
Manifolds are objects which locally resemble Euclidean space. For a $n-$sphere in $\mathbb{R}^n$ for instance, we have that $x^T x-1 =0$, and at each point $x$ on the sphere we can define a linear subspace $V = \{ v : v^Tx = 0\} $, which is the tangent space. The Stiefel and Grassmannian, are embedded submanifolds of $\mathbb{R}^{np}$ and exhibit several special features. Several excellent references on manifolds, the Stiefel and Grassmannian manifolds, and manifold optimization exist\cite{IntroductionManifolds2008,boumalIntroductionOptimizationSmooth2023,
edelmanGeometryAlgorithmsOrthogonality1998a,
OptimizationAlgorithmsMatrix2007}.

The real Stiefel manifold is defined as:
\begin{equation}
    \St(n,p) = \{X \in \mathbb{R}^{n\times p} : X^T X = I_p \}
\end{equation}
which otherwise is the set of all possible $k-$frames. The Grassmannian manifold is a quotient manifold of the Stiefel manifold with respect to orthogonal transformations of the $p-$elements. That is, given an equivalence relation $\sim$ between elements in $\St(n,p)$:
\begin{equation*}
    X \sim Y \iff {\rm span}(X) = {\rm span}(Y),
\end{equation*}
where the span is the column span of the matrix, 
we can define equivalence classes
\begin{equation*}
    [X] = \{ Y \in \St(n,p) : Y \sim X \}.
\end{equation*}
The Grassmannian manifold then is the quotient set of the $\St(n,p)$ w.r.t. $\sim$, that is:
\begin{equation}
    \Gr (n,p) = \St(n,p) / \sim  ~ = \{ [X] : X \in \St(n,p) \}.
\end{equation}
which is the set of all equivalence classes of $k-$frames. Complex variants of these manifolds exist, though in this work, we focus on the real case, which is sufficient for a large class of many-body Hamiltonians.  
For $p=1$, the Grassmannian and Stiefel problems are equivalent, and for $p=n$, the Grassmannian case is trivial and the Stiefel manifold is akin to optimization over the orthogonal group.

Having defined the manifolds of interest, the optimization problem in Eq.~\eqref{eq:minX} can be rewritten as:
\begin{equation}
X^* = \min_{X \in \mathcal{M}} f_\mathcal{M}(X)
\end{equation}
where the manifold $\mathcal{M}$ is either $\St(n,p)$ or $\Gr(n,p)$. This problem can be solved directly using manifold optimization.  
\subsection{Manipulating Riemannian Manifolds}

\begin{figure}
\includegraphics[scale=0.19]{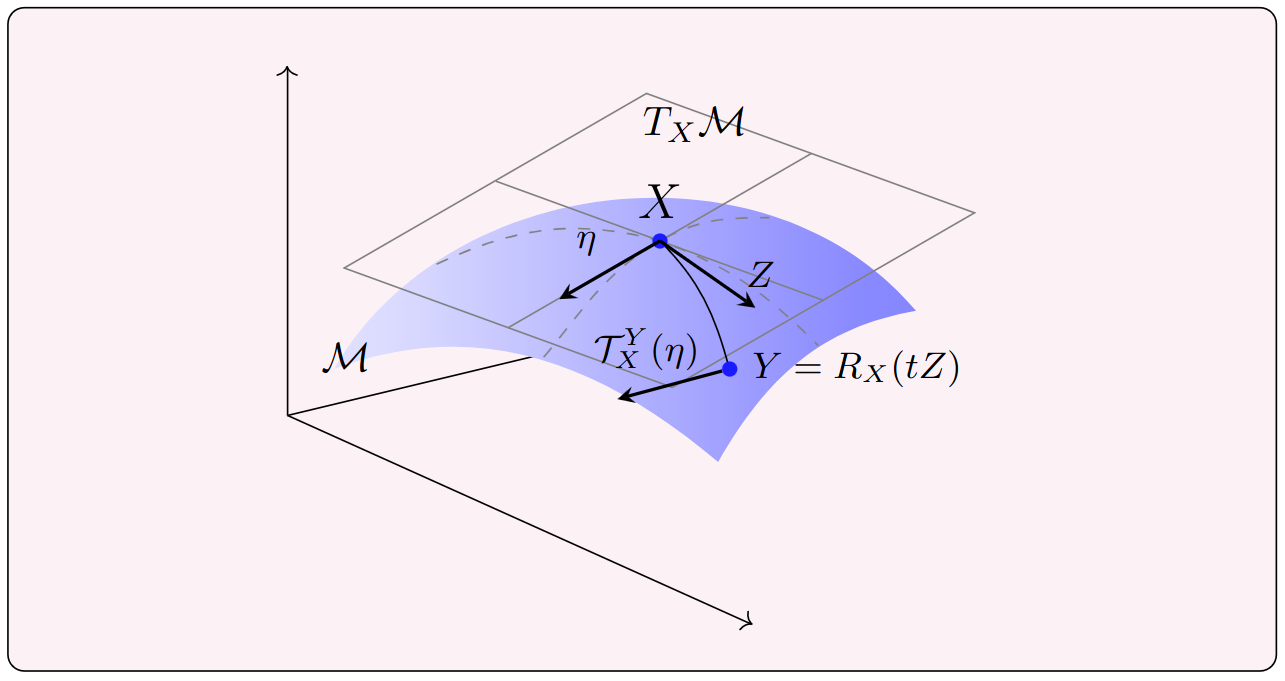}
\caption{Selected elements of Riemannian manifolds. Given an element $X$ of the manifold, the tangent space $T_X M$ contains elements perpendicular to the manifold. We can move along the manifold to a point $Y$ using a retraction $R_X(tZ)$ in the directions of a tangent vector, and we can transport tangent vectors using transporters $\mathcal{T}_{R_X}$.}
\label{fig:manifolds}
\end{figure}

Movement on the manifold is often described by parameterized curves. Let $c(t)$ be a curve such that $c(0)=X$ and $\dot{c}$ describes the velocity of the trajectory in a certain direction. The tangent space at a point $X$ is given by the collection of all possible $\dot{c}$ elements. From the Stiefel manifold, where  $X^T(t)X(t)=I_p$, the tangent space for the Stiefel manifold at $X \in \St(n,p)$ is the set:
\begin{equation}
    T_X \St(n,p) = \{Z : Z^T X + XZ^T= 0  \}  ,
\end{equation}
which is the collection of $n \times p$ matrices $Z$ such that $Z^T X $ is skew-symmetric. The tangent space of the quotient manifold has a more complicated structure and consists of a vertical portion containing tangent elements within the equivalence class, and then a horizontal tangent space, which generally is the tangent space of interest and has a unique representative in the Stiefel tangent space \cite{boumalIntroductionOptimizationSmooth2023,edelmanGeometryAlgorithmsOrthogonality1998a}:
\begin{equation}
    T^H_{X} \Gr(n,p) = \{ Z : Z^T X = 0  \}. 
\end{equation}
Projections onto the tangent space are useful within a variety of contexts, and both manifolds have common projections:
\begin{align}
    P^{\St}_X(Z) = Z - X ~\sy X^T Z ,  \\
    P^{\Gr}_X(Z) = Z - XX^T Z . 
\end{align}
If at every point $X$ we endow the tangent space with a Riemannian metric, then we have a Riemannian manifold. A natural metric for embedded submanifolds is the Euclidean inner product for matrices:
\begin{equation}
    \langle Z, Y \rangle_X = \Tr Z^T Y
\end{equation} 
where $Z,Y \in T_X M$. Because it is generally assumed that the inner product is at $X$, we often omit the subscript. 

To move around the manifold, we use retractions, which are mappings from the tangent space to the manifold. A retraction at a point $X \in M$ in a direction $V \in T_X M$ is a function $f_X(V)$ that satisfies:
\begin{align*}
f_X(0) &= X , ~~\frac{d}{dt} f_X(tV)|_{t=0}= V .
\end{align*}
This represents a first-order retraction, and higher-order retractions (such as the second-order retraction with constant acceleration) can be derived. 

\subsection{Riemannian Optimization}
To consider how solutions to the problems on the manifold, we first need to establish the geometry of the problem through the Riemannian gradient and Hessian.  

For a function $f$ on a Riemannian manifold let $\bar{f}$ denote its smooth extension on $\mathbb{R}^n$. The Riemannian gradient ${\rm grad} f$ is given as the projection of the gradient of $\bar{f}$ onto the tangent space:
\begin{align}
{\rm grad} f(X) = P_X ({\rm grad} \bar{f}(X))
\end{align}
and satisfies (given some retraction $R_X$:
\begin{equation*}
f(R_X(t V )) = f(X) + t \langle {\rm grad} f(X), Z \rangle +  O(t^2). 
\end{equation*}
For a tangent vector on the manifold, we can calculate its vector field derivative or affine connection $\nabla_U$. The standard connection for embedded submanifolds is the tangent projection of the smooth vector field derivative:
\begin{align}\label{eq:rhess}
{\rm Hess} f(X)[V] = \nabla_V G(X) = P_X (D\bar{G}(X)[V]).
\end{align}
Here $G$ is an extension of ${\rm grad} f(X)$ in real space, and $D$ is the differential. We show explicit forms for Problems \eqref{eq:prob_st} and \eqref{eq:prob_gr} in the next section. Regardless, these allow us to describe critical points and second-order geometry of the manifold and have important relations to the Euclidean gradient and Hessian. 

The last component involves vector transport, which allows us to move vectors along the manifold. Due to the structure of the manifold (described by the connection), a function transporting a vector may not necessarily be unique (unlike parallel transport). Given points $X$ and $Y \in \mathcal{M}$, where $Y = R_X(tZ)$, a transporter $\mathcal{T}^{Y}_X(\eta)$ is a linear map from $T_X$ to $T_Y$ where $\mathcal{T}_X^X(\eta)=\eta$. 

The projection operator is an example of a transporter, and one can sometimes use retractions to build transporters. In an optimization procedure, vector transport allows us to update information in the search direction related to previous steps. The above points are shown pictorially in Fig.~\eqref{fig:manifolds}.

\section{Representing Tangent Vectors of the Stiefel Manifold}\label{sec:tangents}

Before introducing the quantum scheme, we take a closer look at tangent vectors and our choice of representation. While we will have a relatively straightforward way to represent the Stiefel manifold on a quantum device, we need a suitable means to find and manipulate tangent vectors on a quantum computer. The key idea is that we use operators associated with tangent vectors instead of tangent vectors themselves. 

Specifically, we find left and right Lie algebra actions on the state (which we will commonly refer to as left and right actions)\cite{hallLieGroupsLie2015a}. These Lie algebra actions are elements of the skew-symmetric matrix Lie algebra $\sk(n)$ and $\sk(p)$, which act on the manifold to yield elements of the corresponding tangent space. By defining an inner product on these actions, we can substitute the tangent vectors for actions, and then define suitable retractions. 

\subsection{Tangent Vector Decompositions}
A common tangent vector decomposition for the Stiefel manifold involves the projection on $X$ and its orthogonal complement. Specifically, let $X_\perp$ be a set of $n \times (n-p)$ matrices such that $I_n = XX^T + X_\perp X_\perp^T$. Then, we can decompose $Z \in T_X \St(n,p)$ as:
\begin{align}
Z =  X A  + X_\perp C
\end{align} 
Here, $A$ has $n(n-1)/2$ degrees of freedom and  $X_\perp^T Z$ is $(n-p) \times p $ dimensional, representing the degrees of freedom in the tangent space. 

The tangent vector also yields the following decomposition:
\begin{equation}
Z = L_X(Z) X - X A_X(Z)
\end{equation}
where $L_X$ and $A_X$ are mappings from the tangent space to skew-symmetric matrices of dimension $n$ and $p$ respectively, and are given as:
\begin{align}\label{eq:leftact}
L_X(Z) &= Z X^T - XZ^T \\ 
A_X(Z) &= X^T Z . \label{eq:rightact} 
\end{align}
For the Stiefel manifold these represent left and right Lie algebra actions which when applied to the manifold yield an element of the tangent space. For the Grassmannian manifold, it holds that $A_X(Z)=0_p$. 

Left multiplication of $Z$ by $X^T$ yields the relation:
\begin{equation}\label{eq:left_right}
A_X(Z) = \frac{1}{2} X^T L_X(Z) X
\end{equation}
indicating that there exists a further parametrization of the skew-symmetric degrees of freedom:
\begin{equation}\label{eq:left_general}
L_X^\alpha(Z) = (I - \alpha XX^T)ZX^T - XZ^T (I-\alpha XX^T)
\end{equation}
where $\alpha \in [0,1]$.  This decomposition has been noted in different contexts, namely through variations of the Stiefel inner product \cite{huperLagrangianApproachExtremal2021a, zimmermannComputingRiemannianLogarithm2022b}. In this work, we focus on the $\alpha=0$ case, which practically can lead to simpler expressions for the problems of interest, though in the Appendix we explore the case where $\alpha=1$.

A natural retraction of $L_X(V)$ and $A_X(V)$ is given by: 
\begin{equation}\label{eq:retract}
R_X(V) = e^{L_X(V)} X e^{-A_X(V)}.
\end{equation}
For the more general form in Eq.~\eqref{eq:left_general}, the corresponding retraction is given by:
\begin{equation}
R^{\alpha}_X(V) = e^{L_X^\alpha(V)} X e^{(2\alpha - 1)A_X(V)}.
\end{equation}
These are both in principle second-order retractions, though practically they are limited by the accuracy to which we can implement the matrix exponential. 

\subsection{Inner Products on Tangent Vector Operators}

The actions given in Eqs.~\eqref{eq:leftact} and \eqref{eq:rightact} have inner products which we can define as well. 

\begin{lemma}\label{lem:metric}
 Given two tangent vectors $Z$ and $W$ in $T_X \St(n,p)$, the inner product $\langle Z, W \rangle$ can be defined in terms of inner products on their respective left and right transformations:
\begin{align}
\langle Z, W\rangle =  \frac{1}{2}\langle L_X(Z), L_X(W) \rangle - \langle A_X(Z), A_X(W) \rangle,
\end{align}
where $\langle \cdot , \cdot \rangle$ is the Euclidean trace product. 
\end{lemma}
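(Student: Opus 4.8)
The plan is to verify the identity by direct computation, substituting the definitions $L_X(Z) = ZX^T - XZ^T$ and $A_X(Z) = X^T Z$ from Eqs.~\eqref{eq:leftact} and \eqref{eq:rightact} into the right-hand side and reducing every resulting term to a trace of the form $\Tr Z^T W$. Only three ingredients should be needed: the cyclic invariance of the trace, the manifold constraint $X^T X = I_p$, and the tangent-space condition that $A_X(Z) = X^T Z$ and $A_X(W) = X^T W$ are skew-symmetric (which is precisely what $Z, W \in T_X \St(n,p)$ means). Since the Euclidean metric is bilinear and symmetric, it suffices to track the two pieces $\langle L_X(Z), L_X(W)\rangle$ and $\langle A_X(Z), A_X(W)\rangle$ separately.

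First I would expand the left action term as $\langle L_X(Z), L_X(W)\rangle = \Tr L_X(Z)^T L_X(W) = \Tr (XZ^T - ZX^T)(WX^T - XW^T)$, which produces four trace terms. The two ``diagonal'' terms, $\Tr XZ^T W X^T$ and $\Tr ZX^T X W^T$, each collapse to $\Tr Z^T W = \langle Z, W\rangle$ after cycling an $X$ around the trace and applying $X^T X = I_p$; notably these require only the orthonormality constraint and not the tangent condition. The two ``cross'' terms, $-\Tr XZ^T X W^T$ and $-\Tr ZX^T W X^T$, reduce under cyclicity to $-\Tr A_X(Z)^T A_X(W)^T$ and $-\Tr A_X(W) A_X(Z)$, respectively, each a trace of a product of two right-action matrices.

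The decisive step is then to invoke skew-symmetry: because $A_X(Z)^T = -A_X(Z)$ and $A_X(W)^T = -A_X(W)$, each cross term equals $\Tr A_X(Z) A_X(W) = -\langle A_X(Z), A_X(W)\rangle$, so the two of them together contribute $+2\langle A_X(Z), A_X(W)\rangle$ once their leading minus signs are accounted for. Collecting all four contributions gives $\langle L_X(Z), L_X(W)\rangle = 2\langle Z, W\rangle + 2\langle A_X(Z), A_X(W)\rangle$; halving and rearranging yields the claimed identity.

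I expect the only real obstacle to be the sign bookkeeping on the cross terms, and it is exactly here that the tangent-space hypothesis enters essentially: without the skew-symmetry of $A_X(Z)$ and $A_X(W)$, the cross terms would not combine into $-\langle A_X(Z), A_X(W)\rangle$, and the identity would fail for a generic $Z, W \in \R^{n\times p}$. As an independent consistency check one could instead substitute the decomposition $Z = L_X(Z) X - X A_X(Z)$ into $\langle Z, W\rangle = \Tr Z^T W$ directly and simplify, but the four-term trace expansion above is the most transparent route and isolates clearly where each hypothesis is used.
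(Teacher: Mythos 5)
Your proof is correct, and it reaches exactly the same intermediate identity as the paper, $\langle L_X(Z), L_X(W)\rangle = 2\langle Z,W\rangle + 2\langle A_X(Z), A_X(W)\rangle$, but by a genuinely different route. The paper first decomposes each tangent vector against an orthogonal complement, $Z = X A_X(Z) + X_\perp C_Z$, rewrites the left action as $L_X(Z) = X_\perp C_Z X^T - X C_Z^T X_\perp + 2 X A_X(Z) X^T$, and then expands both sides in the $(A, C)$ blocks, obtaining $\langle Z, W\rangle = \Tr C_Z^T C_W + \Tr A_Z^T A_W$ and $\langle L_X(Z), L_X(W)\rangle = \Tr\left( C_Z C_W^T + C_Z^T C_W + 4 A_Z^T A_W\right)$. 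You instead expand $\Tr (XZ^T - ZX^T)(WX^T - XW^T)$ directly from the definitions, never introducing $X_\perp$ or the $C$ matrices. What your version buys: it is more elementary and self-contained, and it cleanly isolates which hypothesis does what --- the two diagonal terms collapse using only $X^T X = I_p$, while only the cross terms need the tangency condition $A_X(Z)^T = -A_X(Z)$, without which they produce $-\Tr A_X(Z) A_X(W)$ rather than $+\langle A_X(Z), A_X(W)\rangle$ and the stated identity fails for generic $Z, W \in \R^{n\times p}$. What the paper's version buys: the $(A,C)$ block decomposition exhibits the tangent-space degrees of freedom explicitly and is reused throughout the appendix (the $J_X$ identities, the Hessian derivations), so the metric identity comes almost for free in that language. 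One small caution on wording: your phrase ``each cross term equals $\Tr A_X(Z) A_X(W)$'' refers to the cross terms stripped of their leading minus signs; with signs attached each cross term equals $+\langle A_X(Z), A_X(W)\rangle$, which is what your final tally actually uses, so the arithmetic is right, but carrying the signs consistently would remove the ambiguity.
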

\begin{proof}
The above relation can be derived using $L_X(Z) = X_\perp C X^T - X C^T X_\perp + 2 XA_X(Z)X^T $ on the right-hand side, shown in the Appendix. \end{proof}
Again, the Grassmannian result holds by setting $A_Z$ to $0_p$.

\subsection{Generating Actions from Tangent Vectors}
While for a tangent vector $Z$, the mappings $L_X$ and $A_X$ can be found directly, we would like some insights on how these transformations behave with generic elements of $\mathbb{R}^{n \times p}$ and tangent space projections. 
\begin{lemma}\label{lem:st}
For a $n \times p$ matrix $V$, $L_X(V)$ is equivalent to $L_X(P^{\St}_X(V))$.
\end{lemma}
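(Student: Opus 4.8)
The plan is to exploit the linearity of the map $L_X$ together with the symmetry of the correction term appearing in the Stiefel projection. First I would write the projection explicitly as $P^{\St}_X(V) = V - X\,\sy(X^T V)$, and note that the left action $L_X$ defined in Eq.~\eqref{eq:leftact} is linear in its argument. Consequently,
\begin{equation*}
L_X\big(P^{\St}_X(V)\big) = L_X(V) - L_X\big(X\,\sy(X^T V)\big),
\end{equation*}
so the entire claim reduces to showing that $L_X$ annihilates the correction term $X\,\sy(X^T V)$.

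To establish this, I would set $S = \sy(X^T V)$, which is symmetric by construction, and compute directly from the definition $L_X(Z) = ZX^T - XZ^T$:
\begin{equation*}
L_X(XS) = (XS)X^T - X(XS)^T = XSX^T - XS^T X^T = X(S - S^T)X^T.
\end{equation*}
Since $S^T = S$, the right-hand side vanishes identically, and hence $L_X\big(P^{\St}_X(V)\big) = L_X(V)$, which is the desired equivalence.

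I expect the only subtlety to be conceptual rather than computational: recognizing that the result follows purely from the symmetry of $\sy(X^T V)$ and does not require invoking the manifold constraint $X^T X = I_p$ at any point. The cancellation is an algebraic identity valid for every $n \times p$ matrix $X$, since $L_X$ is built from a skew-symmetrizing structure that pairs trivially against the symmetric part $S$. In this sense there is no genuine obstacle; the lemma expresses the fact that the symmetric component of $X^T V$ is invisible to the left action. It is worth remarking that this is precisely what makes it legitimate to apply the projection freely when constructing a left action from an arbitrary ambient matrix $V$, which is the intended use in the subsequent quantum construction.
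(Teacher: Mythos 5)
Your proof is correct: the paper states Lemma~\ref{lem:st} without proof, and your argument (linearity of $L_X$ plus the algebraic identity $L_X(XS) = X(S-S^T)X^T = 0$ for symmetric $S$) is precisely the computation the authors evidently considered immediate, consistent with the definitions in Eqs.~\eqref{eq:leftact} and the Stiefel projection. Your added observation that the cancellation holds without invoking $X^T X = I_p$ is accurate and a nice sharpening, though not needed for the statement itself.
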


\begin{lemma}\label{lem:gr}
For a $n\times n$ matrix $B$ acting on $X$, the left action of the Grassmannian tangent projection of $BX$ is given by the skew symmetric matrix:
\begin{align}\label{eq:thm1}
\begin{split} L_X (P_X^\Gr(BX)) &= 
[\sy(B), XX^T ]  +\{\sk(B), XX^T  \} \\ &~~ - 2 XX^T  \sk(B) XX^T .
\end{split}\end{align}
\end{lemma}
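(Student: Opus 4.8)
The plan is to substitute the definitions and simplify directly; the whole lemma is a short computation that becomes transparent once one introduces the projector $P := XX^T$ and splits $B$ into its symmetric and skew parts. First I would expand the Grassmannian tangent projection as $P_X^\Gr(BX) = BX - XX^T BX = (I_n - P)BX$, and feed this into the left action $L_X(Z) = ZX^T - XZ^T$. The key structural facts are the Stiefel constraint $X^T X = I_p$, which makes $P$ symmetric and idempotent ($P^T = P$, $P^2 = P$), so that products such as $X X^T X X^T$ collapse to $XX^T$ and the final expression can be written compactly in terms of $P$ alone.

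Concretely, setting $W = (I_n - P)BX$, I would compute $W X^T = (I_n - P)B\,XX^T = (I_n - P)B P$ and, since $(I_n - P)$ is symmetric, $X W^T = X X^T B^T(I_n - P) = P B^T (I_n - P)$. Subtracting gives the intermediate form
\begin{equation*}
L_X(W) = (I_n - P)BP - PB^T(I_n - P) = BP - PBP - PB^T + PB^T P .
\end{equation*}

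The decisive step is the decomposition $B = \sy(B) + \sk(B)$, so that $B^T = \sy(B) - \sk(B)$. Substituting this into the four terms above and collecting, the two contributions $-P\sy(B)P$ and $+P\sy(B)P$ cancel, the symmetric part assembles into a commutator $[\sy(B), P]$, the skew part assembles into an anticommutator $\{\sk(B), P\}$, and there remains an isolated sandwich term $-2P\sk(B)P$. Restoring $P = XX^T$ then reproduces the claimed identity exactly, and skew-symmetry of the result can be checked directly (each of $[\sy(B), XX^T]$, $\{\sk(B), XX^T\}$, and $XX^T\sk(B)XX^T$ is skew), confirming consistency with the fact that $L_X$ lands in $\sk(n)$.

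I expect the main obstacle to be purely one of bookkeeping: after expansion there are eight terms to track, and one must correctly pair the two $P\sy(B)P$ terms that cancel while keeping the signs straight on the remaining pieces. There is no deep difficulty, but the reason the answer takes such a clean form is the symmetric/skew split, which is what makes the cancellation of the $P\sy(B)P$ terms transparent and cleanly separates the symmetric part of $B$ (entering only through a commutator with the projector) from the skew part (entering through the anticommutator together with the $XX^T\sk(B)XX^T$ correction). Choosing to introduce $P$ at the outset, rather than carrying $XX^T$ through every line, is what keeps the computation readable.
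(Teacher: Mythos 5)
Your proof is correct and is essentially the paper's argument: both come down to expanding $L_X$ of the projected vector using the facts that $P = XX^T$ is a symmetric idempotent and that $B = \sy(B) + \sk(B)$, which produces the commutator from the symmetric part and the anticommutator plus the $-2XX^T\sk(B)XX^T$ correction from the skew part. The only difference is organizational --- the paper first records the unprojected identity $L_X(BX) = [\sy(B), XX^T] + \{\sk(B), XX^T\}$ and then inserts the projection $P_X^{\Gr}(BX) = [B, XX^T]X$, whereas you expand $L_X\bigl((I_n - XX^T)BX\bigr)$ in a single pass --- but the underlying algebra is the same.
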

\begin{proof}
The left action of $BX$ itself yields 
\begin{align}\label{eq:leftdecomp}
L_X(BX) =  [\sy (B), XX^T ] + \{ \sk B, XX^T \}.
\end{align}
Inserting the projection $P_X^\Gr(BX)=[B,XX^T]X$ yields Eq.~\eqref{eq:thm1}. 
\end{proof}
This has an immediate corollary, relating symmetric operators acting on $X$ to their projection onto the tangent space. 
\begin{corollary}\label{cor:symm} The commutator of a symmetric matrix $A$ with $XX^T$, $[A,XX^T]$, is a left action on $X$ producing a tangent vector in $T_X \Gr(n,p)$.
\end{corollary}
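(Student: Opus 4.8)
The plan is to obtain the claim as a direct specialization of Lemma~\ref{lem:gr}, whose proof already establishes the decomposition in Eq.~\eqref{eq:leftdecomp}, namely $L_X(BX) = [\sy(B), XX^T] + \{\sk(B), XX^T\}$ for an arbitrary $n \times n$ matrix $B$. The key observation is that a symmetric matrix $A$ satisfies $\sk(A) = 0_n$ and $\sy(A) = A$, so substituting $B = A$ collapses the right-hand side to its commutator term alone.

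Concretely, I would proceed in two short steps. First, setting $B = A$ in Eq.~\eqref{eq:leftdecomp} gives $L_X(AX) = [A, XX^T]$, since the anticommutator $\{\sk(A), XX^T\}$ vanishes. This immediately identifies the commutator $[A, XX^T]$ as the left Lie algebra action $L_X$ associated with the vector $AX$; in particular $[A,XX^T] \in \sk(n)$, which one can also confirm directly from $A = A^T$ and $(XX^T)^T = XX^T$. Second, I would verify that the vector obtained by applying this action to $X$, i.e. $Z = [A, XX^T]X$, lies in the Grassmannian (horizontal) tangent space rather than merely in the Stiefel tangent space. For this I would compute the right action $A_X(Z) = X^T Z = X^T[A,XX^T]X$ and show it equals $0_p$, using $X^T X = I_p$: the two terms $X^T A X (X^T X)$ and $(X^T X) X^T A X$ cancel. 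Since the Grassmannian tangent vectors are precisely those with $A_X(Z) = 0_p$ (equivalently those satisfying $Z^T X = 0$), this places $Z$ in $T_X \Gr(n,p)$.

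There is no serious technical obstacle here, as the corollary is essentially a reading-off of the $\sk(B)=0$ case of Lemma~\ref{lem:gr}. The one point requiring care is the second step: the statement asserts membership in the Grassmannian tangent space specifically, so it is not enough to note that $[A,XX^T]$ is a valid left action, which would only guarantee a Stiefel tangent vector. The substance of the claim is that symmetric generators produce purely horizontal motion, i.e. that $A_X$ vanishes, which is exactly what the cancellation $X^TAX - X^TAX = 0$ confirms.
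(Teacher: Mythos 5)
Your proof is correct and takes essentially the same route as the paper: the corollary is read off by setting $\sk(B)=0$ in Lemma~\ref{lem:gr} (equivalently in Eq.~\eqref{eq:leftdecomp}), which collapses the left-action formula to the commutator $[A,XX^T]$. Your explicit check that $X^T[A,XX^T]X = 0_p$ is a sound substitute for the paper's implicit reliance on $P_X^\Gr(AX)$ lying in $T_X \Gr(n,p)$ by construction; indeed, for symmetric $A$ one has $[A,XX^T]X = (I_n - XX^T)AX = P_X^\Gr(AX)$, so the two arguments identify the same tangent vector.
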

Looking at the right action, we have a straightforward result. 
\begin{lemma}\label{lem:right}
The transformation $A_X$ for a general $n \times p$ matrix $W$ projected onto the Stiefel manifold is:
\begin{equation}
A_X(P^\St_X(W)) = \sk(X^T W).
\end{equation}
\end{lemma}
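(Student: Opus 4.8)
The plan is to proceed by direct substitution, since both objects entering the statement have explicit closed forms given earlier. First I would unfold the left-hand side by composing the definition of the right action, $A_X(Z) = X^T Z$ from Eq.~\eqref{eq:rightact}, with the Stiefel tangent projection $P^{\St}_X(W) = W - X\,\sy(X^T W)$. This yields
\begin{equation*}
A_X(P^{\St}_X(W)) = X^T\bigl(W - X\,\sy(X^T W)\bigr) = X^T W - (X^T X)\,\sy(X^T W).
\end{equation*}

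Next I would invoke the defining constraint of the Stiefel manifold, $X^T X = I_p$, which holds since $X \in \St(n,p)$. This collapses the second term to $\sy(X^T W)$, leaving $X^T W - \sy(X^T W)$. The final step is to apply the elementary decomposition of a square matrix into its symmetric and skew parts: for any $M$ one has $M = \sy(M) + \sk(M)$, hence $M - \sy(M) = \sk(M)$. Setting $M = X^T W$ (an honest $p \times p$ matrix, so the decomposition is well-posed) gives exactly $\sk(X^T W)$, completing the identity.

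In truth there is no serious obstacle here — the result is a one-line computation once the projection and action formulas are substituted. The only two points requiring a moment of care are that the Stiefel constraint $X^T X = I_p$ must be used to eliminate the projection contribution (this is where the hypothesis $X \in \St(n,p)$ enters), and that the cancellation $M - \sy(M) = \sk(M)$ is just the symmetric/skew splitting rather than any deeper structural fact. Notably, the computation does not require passing through the orthogonal-complement decomposition $Z = XA + X_\perp C$ used for the left action, which reflects the comparatively simple behaviour of $A_X$ relative to $L_X$.
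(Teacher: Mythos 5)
Your proof is correct and coincides with the argument the paper intends: the paper states this lemma without an explicit proof, introducing it only as a ``straightforward result,'' and your one-line substitution of $A_X(Z) = X^T Z$ and $P^{\St}_X(W) = W - X\,\sy(X^T W)$, followed by $X^T X = I_p$ and the splitting $X^T W = \sy(X^T W) + \sk(X^T W)$, is exactly that straightforward computation. Nothing further is needed; in particular you are right that the $X_\perp$ decomposition plays no role here.
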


Let $W= BXK$ with $B$ a $n \times n$ matrix and $K$ $p \times p$ symmetric. Then, the left and right actions are given as
\begin{align}
L_{X}(BXK) &=   [\sy (B), XKX^T ] + \{ \sk(A), XKX^T \},
\end{align}
\begin{equation}
\begin{split} A_X(P_X^\St(W)) =  & \frac{1}{2} \{ X^T \sk (B) X, K\}  +\frac{1}{2}[X^T \sy(B) X, K] . 
\end{split}
\end{equation}

\subsection{Riemannian Gradient and Hessian Forms} 
Using the above results, we can develop gradient and Hessian forms for our problems of interest.
\subsubsection{Subspace Invariant Minimization}
For the subspace minimization problem on the Grassmannian manifold, tangent vectors can be expressed using only the left action. Thus, the left action of the Riemannian gradient of Problem~\eqref{eq:prob_gr} at $X$ and the Hessian acting in a direction $V \in T_X \Gr(n,p)$ at $X$ are:
\begin{align}
L_X({\rm grad} f_\Gr(X)) &= [H, XX^T], \\
L_X({\rm Hess}f_\Gr(X)[V]) &= [[H, L_X(Z)], XX^T].
\end{align}
The gradient follows directly from the left action acting on the Grassmannian projection, as in Lemma \eqref{lem:gr} and Corr.~\eqref{cor:symm}~above, and the Hessian form is derived in the Appendix. 
\subsubsection{Generic Eigenvector Minimization}
For the eigenvector problem on the Stiefel manifold, we can define a similar Riemannian gradient:
\begin{align}
L_X({\rm grad~}f_\St(X)) &= [H, X K X^T], \\
  A_X({\rm grad~}f_\St(X)) &= \frac{1}{2} [X^T H X, K].
\end{align} 
The Hessian of $f_\St(X)$ does not appear to have a simple form in terms of the left and right action of a tangent vector, though we derive one possible form in Appendix \ref{sec:stiefhess}.

\section{Representing Stiefel Manifolds as Quantum States}

In this section, we show how we can use quantum computers to store and manipulate Stiefel and Grassmannian elements, as well as the left and right actions of tangent vectors. We highlight the general structure of our approach in Fig.~\eqref{fig:stiefcirc}, showing state preparation, retractions, and operator measurement.

\begin{figure}
\includegraphics[scale=0.20]{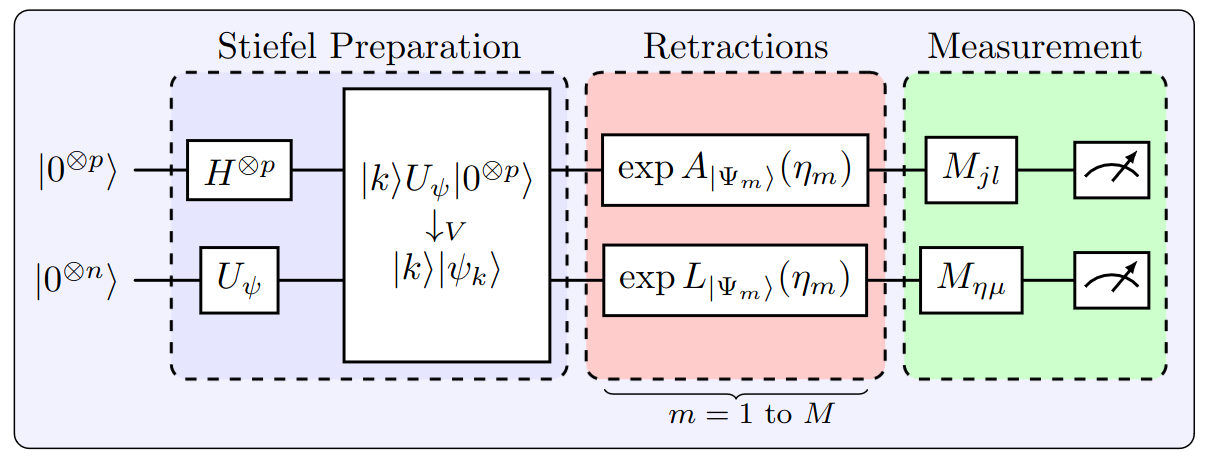}
\caption{Quantum circuit structure for preparing elements of the Stiefel and Grassmannian manifold on a quantum computer. From the computational basis, we create a maximally entangled state with a set of ancilla states and a set of reference states. Retractions are carried out iteratively (here, $M$ times) on the two subsystems, and measurement (relative to some basis $M_{ij}$) allows us to capture observables of interest.} 
\label{fig:stiefcirc}
\end{figure}

\subsection{States and $k-$Frames}
Vectorization of a $n\times p$ matrix $X $ arranges the columns of $X$ into a single vector. In the case of the $k-$frames represented in the Stiefel and Grassmannian optimization, columns represent vectors $|\psi_j\rangle \in \mathcal{H}^n$, with a corresponding computational index, with vectorization yielding: 
\begin{equation}
{\rm vec}(X) = \sum_{k=0}^{p-1} |k \rangle \otimes |\psi_k \rangle 
\end{equation}
where we can take $|k\rangle$ simply to be the binary representation of $k$. The vectorization operator has the useful properties:
\begin{equation}
{\rm vec}(ABC) = C^T \otimes A ~{\rm vec } {(B)} 
\end{equation} as well as the Euclidean trace product of two matrices being equivalent to their vectorized inner product. We can define a state $|\Psi\rangle$ corresponding to ${\rm vec}(X)$:
\begin{equation}
|\Psi \rangle = 2^{-\frac{p}{2}}\sum_{k=0}^{p-1} |k\rangle |\psi_k \rangle = 2^{-\frac{p}{2}}{\rm vec}(X)
\end{equation}
which is a maximally entangled state. The key advantage is that $X$ is classically of dimension $np$, whereas $\Psi\rangle$ requires only $\log_2 n + \log_2 p $ qubits, an exponential reduction in storage cost. This ansatz form has been recognized elsewhere \cite{xieOrthogonalStateReduction2022,xuConcurrentQuantumEigensolver2023, hongQuantumParallelizedVariational2023} (and is in part inspired by observations with the subspace search variational quantum eigensolver \cite{nakanishiSubspacesearchVariationalQuantum2019a}), though not within the context of manifold optimization. 

\begin{table*}
\caption{Comparison of proposed quantum and established classical manifold optimization tools for optimizing over the Stiefel and Grassmannian manifolds. Here, $Z$ and $W$ are tangent vectors, with $A_Z = A_X(Z)$ and $L_Z = L_X(Z)$. Classically, numerous retractions exist, such as the Cayley transform or QR decomposition\cite{boumalIntroductionOptimizationSmooth2023}.} 
\label{tab:components}
\begin{tabular}{c|cc|cc}
Components & Classical $\Gr(n,p)$ & Quantum $\Gr(n,p)$ & Classical $\St(n,p)$& Quantum $\St(n,p)$ \\ \hline
Storage  & $np$ & $\log_2 np$ & $np$ & $\log_2 np$ \\
Tangent Representation & $\mathbb{R}^{(n-p)p}$ & $\sk(n)$ & $\mathbb{R}^{p(n-p)} $ & $\sk(n) +\sk(p)$ \\ 
Retraction & Many${}^*$ &  Exp.  & Many${}^*$ & Exp. \\
Vector Transport & $R_X[V]$ & BCH & $R_X[V]$ & BCH \\  
Metric & $\langle Z,W \rangle$ & $\langle L_Z, L_W\rangle$ & $\langle Z,W \rangle$ & $\langle L_Z, L_W\rangle$, $\langle A_Z, A_W\rangle$ \\ 
\end{tabular}
\end{table*}

In principle, the operations required for traditional manipulation of the manifolds (primarily matrix-vector and rank-$p$ multiplication) can be implemented on quantum computers, and thus common tools for the Stiefel and Grassmannian manifolds could be carried out. The primary difficulty here is that generic linear algebraic operations require substantial ancillary costs, and so are largely unfeasible in non-fault-tolerant quantum computation. Additionally, in contrast to elements of the manifold, the varying normalization of tangent vectors represents a challenge. Compounding this is that optimization requires iterative manipulation, implying qubit costs scaling with the number of iterations and extensive amplitude amplification to counteract necessary dilations. 

We show however that the tools required for manifold optimization with the Stiefel and Grassmannian manifolds can also be realized using more appropriate near-term tools, using classical and quantum resources and the left and right actions developed previously. Table~\ref{tab:components} gives a summary of the different components.

\subsection{Measurement and Tomography}
Measurement on the Stiefel manifold corresponds to observable estimation and state tomography of operators on the $n$, $p$, and $n \times p$ dimensional Hilbert spaces, which we describe below. 

The most basic problem relates to function evaluation. For Problem~\eqref{eq:prob_gr}, this is:
\begin{align}
\frac{1}{2} \Tr X^T H X &= 2^{p-1} \langle \Psi | I_p \otimes H | \Psi \rangle ,
\end{align}
and for Problem~\eqref{eq:prob_st}, we have:
\begin{align}
\frac{1}{2} \Tr X^T H X K &= 2^{p-1} \langle \Psi | K \otimes H | \Psi \rangle 
\end{align}
which both can be measured using standard estimation techniques using local operator decompositions of $H$ and $K$, i.e. with Pauli matrices. 

The matrix product $XX^T$ has the form $XX^T = \sum_k |\psi_k \rangle \langle \psi_k |$, which is the reduced density matrix obtained by tracing over the index degrees of freedom. Letting $\eta, \mu$ be elements of a complete basis of $\mathcal{H}^n$, $BXX^T$ has the form:
\begin{align}
BXX^T =  2^{p}\sum_{\eta, \mu} \langle \Psi | I_p \otimes  (M_{\mu \eta} B) | \Psi \rangle M_{\eta\mu} 
\end{align}
where $M_{\mu \eta } = | \mu \rangle \langle \eta|$. If we replace $ XX^T$ with $X K X^T$, where $K$ is not necessarily diagonal, then we have:
\begin{align}
B  X K X^T &=  2^{p}\sum_{\eta, \mu}  \langle \Psi | K \otimes (M_{\mu \eta} B) | \Psi \rangle M_{\eta\mu} .
\end{align}

Finally, we can measure $p \times p$ operators themselves as follows. Letting $|k\rangle$ denote a basis over $\mathcal{H}^p$, and a $p\times p$ matrix $X^T B X$, we have:
\begin{align}\label{eq:anc_meas}
X^T B X &= 2^{p} \sum_{kj}  \langle \Psi | M_{kj} \otimes B | \Psi  \rangle  M_{kj}
\end{align}
where $M_{kj} = | k \rangle \langle j |$. When $B=H$, this gives the subspace energy matrix. Collectively, these give us a complete set of tools to measure and construct tangent operators, including all of the quantities required in the derivatives of Problems \eqref{eq:prob_st} and \eqref{eq:prob_gr}. 

\subsection{Retractions on the Stiefel and Grassmannian Manifold}
For a tangent vector $Z \in \St(n,p)$ the vectorization of the tangent vector decomposed into the left and right actions is:
\begin{equation}
{\rm vec}(Z) = (I_p \otimes L_X(Z) + A_X(Z) \otimes I_n )~{\rm vec}(X).
\end{equation}
which implies a simple retraction on $|\Psi\rangle$ using the linearity of the exponential transformation:
\begin{equation}\label{eq:qretract}
R_X(tZ) = \frac{1}{2^{p/2}}e^{t A_X(V)} \otimes e^{ t L_X(Z)} |\Psi \rangle.
\end{equation}
The given retraction is a second-order retraction but classically is challenging to implement for large dimensions. Implementing matrix exponentials on quantum devices is an active area of research and in general, can be most readily realized in an ancilla-free manner via trotterization techniques\cite{childsTheoryTrotterError2021b}. Practically, we find that a second-order approximation is sufficient for an accurate Hessian, although a first-order trotterization is sufficient for convergence guarantees\cite{boumalIntroductionOptimizationSmooth2023}.

Given some operator $O \in \sk(n)$, $O$ acting on $X$ will always be associated with some tangent vector on the Stiefel manifold, though possibly not the Grassmannian. To remove the orthogonal degrees of freedom, we can measure the operator $O$ coupling to the ancilla, via Eq.~\eqref{eq:anc_meas}, and apply the degrees of freedom in the retraction: 
\begin{equation}
R_X(P^\Gr_X(OX)) = e^{O} \otimes e^{X^T O X}{\rm vec}(X).
\end{equation}
We can also use such a technique to check or confirm that a vector $O X$ belongs to $T_X\Gr(n,p)$.

\subsection{Vector Transport}
As previously mentioned, transporters allows us to move vectors between tangent spaces, though is not uniquely defined. Here, we can consider a transporter that specifically transports the actions of a tangent vector. For instance, a tangent vector $\eta$ is transported to $Y=R_X(Z)$ with the exponential retraction as follows:
\begin{align}
\begin{split}
\mathcal{T}_X^{Y}(L_\eta X - X A_\eta) = e^{L_Z} L_\eta e^{-L_Z} Y - Y e^{A_Z}
A_\eta e^{-A_Z} \end{split}.
\end{align}
One can easily verify that the resultant vector is in the tangent space at $Y$. We can then approximate the transported left and right actions (at $Y$) through the BCH expansion:
\begin{equation}
L_{T_X^Y(\eta)}(Y) = L_\eta + [L_Z, L_\eta] + \frac{1}{2}[L_Z,[L_Z, L_\eta]] + O(L_Z^3),
\end{equation}
and similarly for the right action. For small $p$, we also can apply retractions exactly if necessary.

\section{Numerical Calculations}
We highlight the above ideas with noiseless quantum simulations utilizing quantum computational tools. Because the ideas here are quite general, different optimization strategies, ansatz implementations, and retractions can be used. For instance, the commutator expressions present in the gradient and Hessians of Problems \eqref{eq:prob_st}-\eqref{eq:prob_gr} can be obtained from methods based on finite differences using exponential transformations. In other contexts, such as with the contracted quantum eigensolver\cite{smartManyFermionSimulationContracted2022}, this can reduce the number of circuits at a potentially increased measurement cost. Another route for measuring tangent vectors involves estimation via shadow tomography\cite{huangPredictingManyProperties2020d}, allowing us to reconstruct these with a cost dependent on the corresponding shadow norm and desired accuracy. 

The current work utilizes custom packages interfacing with Qiskit (v0.45.1) designed for state preparation and operator manipulation. The essential tools are mappings from operators to exponential unitary transformations and measurement of the state. The Hamiltonian was generated from electron integrals generated with PySCF (v2.2.1, \cite{Sun2017}). The Riemannian gradients and Hessian expressions with the retractions were numerically verified using the recommendations in Ref. \cite{boumalIntroductionOptimizationSmooth2023}.

\subsection{Grassmannian Example with the Riemannian Newton's Trust Region Approach}

For the Grassmannian case, we take the square geometry of molecular $H_4$ at $1.0$~\AA~in a minimal (STO-3G) basis set.  Using the Jordan-Wigner transformation with parity trimming, we obtain a 6-qubit Hamiltonian. We prepare manifold states from $0$ to $3$ ancilla qubits, corresponding to $\Gr(256,1)$ to $\Gr(256,8)$ (though in our symmetry subspace, there are only 36 states of interest). We screen the mean-field state of the Hamiltonian, choosing the $p$-lowest entries, which can be prepared using controlled swap operations. The gradient and Hessian forms are calculated directly from the set of measured observables, and we use the set of 4-RDM operators (which for a four-particle system are equivalent to the computational elements). Using a second-order trotterization, we can implement the left actions of a tangent vector $\eta$ on a quantum computer as an approximation of the exponential function.

For the optimizer,  we use the Riemannian Newton's trust region method\cite{boumalIntroductionOptimizationSmooth2023}, which does not require vector transport or Hessian storage. Instead, we use the Riemannian conjugate gradient (RCG) approach to solve Newton's equation in a trust region. Here, we limit the number of RCG steps to 3, and our initial trust radius is $0.25$. At each step, we attempt to find a search direction $\eta$ such that \begin{equation}
{\rm Hess} f_G(X)[\eta] = - {\rm grad}f_G[X] 
\end{equation} 
subject to a trust region constraint ($\langle \eta, \eta \rangle \leq \kappa $).  This does not involve storage of a Hessian-like quantity but does require calling the Riemannian Hessian at each RCG step. Figure~\eqref{fig:grass} depicts our results, notably the total overlap of each ideal eigenstate with the subspace at each step of the optimization. In each case, we obtain rapid convergence to the appropriate solution in only a handful of steps, following the expected second-order convergence of Newton's method. 

\begin{figure}
\includegraphics[scale=0.51]{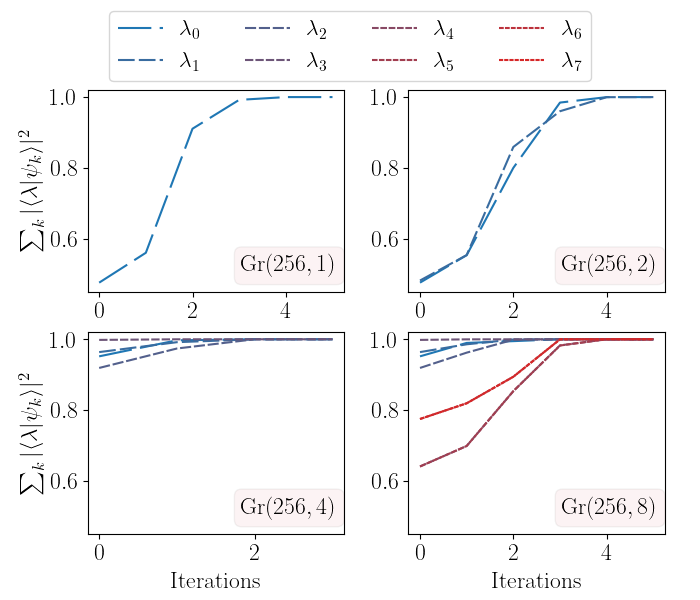}
\caption{Sum of eigenstates populations represented in the subspace, with an increasing number of states included in the Grassmannian optimization. Higher eigenvalues are represented by dashes and red coloring. We used a threshold of 0.001 in the gradient norm. For each state, we can recover the proper energies in no more than 6 iterations. }
\label{fig:grass}
\end{figure}

\subsection{Stiefel Optimization with the Riemannian Conjugate Gradient}

For the second calculation, we look at linear $H_3$, which can also be mapped using the Jordan-Wigner transformation. This can be considered as an optimization over $\St(64,4)$, although there are only 9 states in the $N=3$ and $S_z=\frac{1}{2}$ symmetry subspace of interest. By selecting $K$ to be diagonal with $p-$non degenerate eigenvalues, the optimizer should correctly identify individual states. 

We used the Riemannian non-linear conjugate gradient approach\cite{satoRiemannianConjugateGradient2022}, which at each step obtains a search direction conjugate to previous steps. Unlike Newton's method, we do require transport of tangent vectors $\eta$ and ${\rm grad} f_S(X)$ for the calculation of $\beta$ and the search direction, although for our retraction, $L_X(\eta)$ and $A_X(\eta)$ are invariant under the respective actions. We use the Riemannian form of the Hestenes-Stiefel update to calculate $\beta$.

We consider two forms of $K$. In the first, $K$ is a diagonal positive matrix with increasing diagonals ($K_{ii} > K_{i+1,i+1} > 0$). In the second, we simply invert the orderings, mapping the $i$-th entry of $K$ to the $(p-i+1)$-th entry. The results can be seen in Fig.~\ref{fig:stiefel}.

In the first case, all of our states have some overlap with their target state, and within 5 iterations each state has nearly converged. In contrast, by swapping our orderings, several iterations are required to reorder the states. The ordering of these states is related to the relative eigenvalues of the particular states in question and can bias or influence the practical speed of the calculation\cite{edelmanGeometryAlgorithmsOrthogonality1998a,hongQuantumParallelizedVariational2023}. 

\begin{figure}
\includegraphics[scale=0.51]{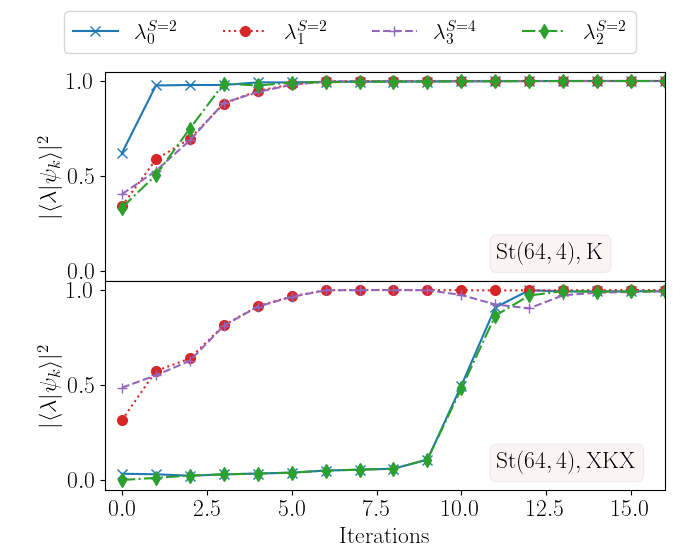}
\caption{State overlaps for the first four eigenstates of linear $H_3$ obtained in an optimization over the Stiefel manifold using the Riemannian conjugate gradient method, with the diagonals of $K$ (top) equal to $(4,3,2,1)$, and $XKX$ (bottom) implying the flipped diagonal. The same initial states are used, showing the algorithm can find and reorder solutions by manipulating orthogonal degrees of freedom.}
\label{fig:stiefel}
\end{figure}

\section{Solutions to the Stiefel Minimization Problem}
Noting that a solution to the Grassmannian problem is a necessary but not sufficient criteria for a solution to the Stiefel problem, we discuss four potential solutions to the Stiefel problem using classical and quantum techniques.
 
{\bf 1. Minimization over the Stiefel Manifold:} The first approach involves directly solving the Stiefel optimization problem, as in the second example. This does not present a theoretical challenge, as generic operators $K$ can be implemented in principle on a quantum device up to a scaling factor. However, practically there are limitations and questions as to how the different states should be weighted. Furthermore, these additional terms involve more complex measurements which may or may not be friendly for near-term devices. 

{\bf 2. Classical Diagonalization of the Energy Matrix: }The second solution involves the minimization of the Grassmannian problem, followed by a classical diagonalization of the subspace energy matrix $X^T H X$. This is similar to the generalized eigenvalue problem, although the eigenvectors in principle will be exact, and there is no potential indeterminacy in $X$. The main challenge here is for large $p$. A demonstration of this approach is shown in Fig.~\eqref{fig:grass_flow}.

\begin{figure}
\includegraphics[scale=0.32]{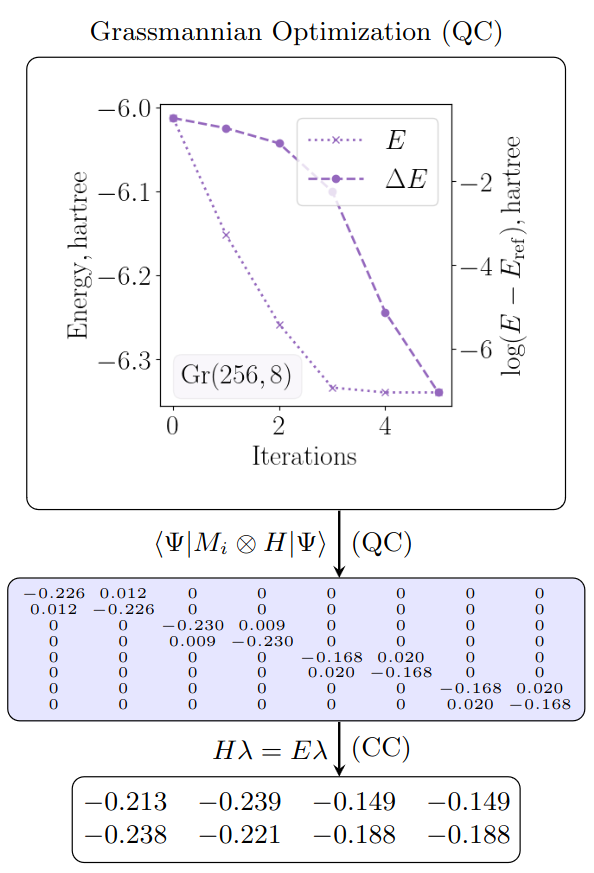}
\caption{Process of obtaining eigenvalues of a correlated molecule using quantum and classical techniques. First, we carry out an optimization of the Grassmannian manifold on a quantum computer for the subspace energy problem. The left shows the energy optimization and the right shows the log error with the lowest $8$ eigenvalues. We then measure the energies in the ancillary basis, resulting in an $8 \times 8$ matrix. Finally, we diagonalize the subspace to produce the final eigenvectors of the system. }
\label{fig:grass_flow}
\end{figure}

{\bf 3. Two-Step Grassmannian-Orthogonal Manifold Optimization: }The third solution translates the above problem to the quantum device. Namely, we replace the eigenvalue problem with a manifold optimization over $O(p)$. When $n$ is large, we might expect that the previous approach and this one have similar complexity. Whereas we might expect structure in the $n$-dimensional subspace (in cases where $H$ is a local operator) such that the cost of generating the $n \times n$ left tangent operator is sub-exponential, structure on the $p-$dimensional space might come from symmetries of the initial guesses or nearness of the solutions, leading to a block diagonal structure.

{\bf 4. Iterative Block-Diagonalization: } Finally, depending on the retraction, we can obtain the Stiefel optimization problem through a series of $\log_2 p$ Grassmannian optimizations. The initial optimization splits $H$ into a $p \times p $ block and a $n-p \times n-p$ block. For some retractions, we instead perform a Grassmannian-like optimization using a particular form of the Stiefel problem. Solving $\Gr(n,\frac{p}{2})$, with $\frac{p}{2}$ vectors taken from the $p$ block splits the $p-$block into two $\frac{p}{2} \times \frac{p}{2}$ blocks. Repeating this, we choose $\frac{p}{4}$ vectors from each $\frac{p}{2}$ block, yielding a different $\Gr(n,\frac{p}{2})$ problem. By performing $\log_2(p)$ optimizations, we can fully diagonalize the $p-$block, essentially performing matrix diagonalization via iterative block diagonalization. We detail this approach in the Appendix. 

\section{Discussion}

\textbf{Advancing Prior Variational Approaches}. The current approach is inspired by the unitary optimization involved in the subspace-search VQE approach (SSVQE)\cite{nakanishiSubspacesearchVariationalQuantum2019a}, for calculating excited states, as well as the contracted quantum eigensolver\cite{smartQuantumSolverContracted2021} for single eigenstates. In SSVQE, a single unitary is updated iteratively to locate each excited state. Additionally, in state-averaged approaches, multiple cost functions can be combined with variance constraints to prepare states\cite{yalouzStateaveragedOrbitaloptimizedHybrid2021}. In the contracted quantum eigensolver, we construct an iterative wavefunction based on information about the present state relevant in the stationary condition (i.e. the contracted equation). 

Following work on SSVQE, it was later demonstrated that multiple target states could be prepared simultaneously in a maximally entangled state \cite{xieOrthogonalStateReduction2022}, and more recently that the optimization could be performed at once, realized as the parallel or concurrent eigensolvers\cite{xuConcurrentQuantumEigensolver2023, hongQuantumParallelizedVariational2023}. During the preparation of this work, a similar ansatz was also extended to the contracted quantum eigensolver framework\cite{benavides-riverosQuantumSimulationExcited2023b}. Each of these methods focuses on either a static optimization of the unitary acting on the wavefunction or a weighted ensemble of states. While these later methods do offer the same logarithmic advantage of solving for $p$-eigenstates, similar problems to VQE are present, and these methods also do not fully exploit the nature of the Stiefel manifold tangent space when looking at Problem~\eqref{eq:prob_st}. It also appears in certain of these approaches that there is difficulty in optimizing the landscape of several states, which could reasonably be due to the use of gradient-based approaches and issues of parameterization. 

To compare the manifold approach with other variational approaches, we focus on the $p=1$ case, which is more well-documented and corresponds to the minimization of the unit sphere. We also refer to the notion of a trivialization, describing mappings on manifold problems that remove manifold constraints\cite{lezcano-casadoTrivializationsGradientBasedOptimization2019}. As noted by the authors of Ref.~\cite{lezcano-casadoTrivializationsGradientBasedOptimization2019}, Euclidean optimization over manifold parameterizations is ubiquitous with optimization problems and has a much longer history\cite{boumalIntroductionOptimizationSmooth2023}. 

The VQE can be seen as a static trivialization of the manifold problem. That is, the map $\phi$ of the manifold is a trivialization map which maps $\mathbb{R}^n$ to the manifold, and thus the cost function maps $\mathbb{R}^n$ to $\mathbb{R}$. The mapping itself induces a metric, which has been recognized in methods such as the quantum natural gradient\cite{stokesQuantumNaturalGradient2020,yamamotoNaturalGradientVariational2019a}. While we can consider the nature of the manifold, these methods can be more costly and are ultimately still limited by the choice of parameterization. While numerous variational quantum algorithms have been developed inspired by iterative ansatz frameworks, these methods still rely on Euclidean optimization over the parameter space.

The contracted quantum eigensolver, or CQE, is another variational approach that attempts to solve the contracted Schr{\"o}dinger equation (CSE) through the minimization of the residuals of the CSE in an iterative ansatz\cite{smartQuantumSolverContracted2021}. The standard form uses the anti-Hermitian form of the CSE, resulting in a two-body unitary operator. Later work showed one could perform a locally parameterized optimization (this is likened to a dynamic trivialization of the manifold), where at each step an approximate optimization step is performed\cite{smartAcceleratedConvergenceContracted2022a}. The optimizer does not consider the manifold structure of the state and has several heuristic elements, but in practice provides efficient performance, at the cost of potentially deep circuits. The CQE is derived relative to the CSE and has an intimate relation to reduced density matrix-based approaches\cite{
mazziottiAntiHermitianContractedSchrodinger2006}, and the exact exponential parametrizations of the wavefunction\cite{nakatsujiStructureExactWave2001}, and not necessarily the manifold optimization problem.

\textbf{Advantages of the Current Approach} 
The current framework is completely general in that we are focusing on incorporating the manifold structure of the quantum state into our optimization and providing tools related to Riemannian optimization. As a result, there is a higher degree of flexibility and generality for exploring various optimization strategies, establishing rates of convergences or adjusting tomographic schemes. The retraction, for instance, plays a key role in practical calculations, and while the exponential retraction should be more efficient on quantum devices, in practice other tools may be favorable.

The explicit tools provided in this work provide a mapping of tools in Riemannian optimization to the many-body eigenvalue problem, relevant for applications in quantum chemistry and physics and other state preparation-related problems. For multiple eigenstates, one benefit, noted in other work with a similar anastz, is that this wavefunction form allows for simultaneous estimation of a large number of excited states, despite using only the logarithm of that number of quantum states. The current work also fully incorporates the manifold structure of the problem into the optimization, inheriting rigorous convergence through Riemannian optimization techniques, as long as certain criteria in the implementations are met.

\textbf{Pathways Towards Efficient Realizations}.
Several challenges exist in the idealized form of the current approach. For large $p$, explicit evaluations of the cost function and gradient are offset by an exponentially small prefactor. That is, the measurement cost associated with $p=2^{10}$ results in a thousand-fold increase in measurement cost for nearly all quantities. However, this will depend heavily on the target application and the required tolerances. 

The maximally entangled state preparation step can also lead to deep circuits. As a simple example, while the maximally entangled state $|\Psi\rangle = \sum_i |i\rangle |i\rangle$ can be generated with linear scaling entangling gates, by perversely permuting the indices, very challenging mappings can be constructed, which may or may not have efficient implementations.

Finally, the tangent vector parameterization here spans the complete operator space, scaling exponentially in both state tomography and circuit depth costs. While the current numerical results are for few qubits and essentially allow for a complete representation of the tangent space, these are \emph{not scalable}. Implementing such a tangent vector is also challenging, and not feasible for near-term devices. Using an approach like shadow tomography\cite{huangPredictingManyProperties2020d}, the tomography of the gradient of the Grassmannian problem, involving the collection of operators $\hat{O}_i = [\hat{M}_{ij},\hat{H}]$, can be estimated efficiently to additive precision (depending on the shadow norm of the procedure), and it is possible an efficient unitary representation for the retraction could be developed as well. Again, the primary purpose in this work is to introduce a simple framework for exploring these problems through the lens of manifold optimization, and future work will explore these problems, particularly through the lens of reduced parameterizations. 

\section{Conclusion}
The eigenstate problem is critical for understanding many-body quantum systems and has been the focus of much research for multiple levels of quantum computing capabilities. Traditional variational quantum algorithms are hopeful candidates for finding efficient paths forward with near-term devices, but are limited and fundamentally focus on optimizing a parametrization of the state, rather than the state itself. As a result, problems related to barren plateaus, local minima, and finding optimal parameterizations are quite common.

In this work, we take a different approach to the eigenstate problem, realizing it as a manifold optimization problem over the Grassmannian and Stiefel manifolds. The Stiefel manifold represents a collection of $p-$orthogonal vectors, and the Grassmannian manifold represents the points on the Stiefel manifold which are equivalent under rotations of the $p-$vectors. For the Stiefel and Grassmannian manifolds, the tangent space can be parameterized in terms of operators acting on the state, and these operators can serve as proxies for elements of the tangent space at a given point. Using a quantum computer, we efficiently realize these manifolds and demonstrate retractions, vector transport, inner products, and measurement using quantum computational tools. Finally, this work demonstrates efficient and highly accurate many-body calculations and provides examples for solutions of the $p$-lowest eigenvalue problem using the Grassmannian manifold.

The current work utilizes the manifold structure of quantum states to find solutions to the many-eigenstate problem. That quantum states are manifolds allows us to efficiently represent certain embedded submanifolds and elements of the manifold on a quantum computer. The framework of manifold optimization here grants direct access to the state, offers rigorous convergence guarantees through Riemannian optimizers and provides sufficient and necessary criteria for eigenstate problems. Collectively, these represent an exciting new avenue for expanding and enabling eigenstate preparation techniques within quantum computation.

\section{Acknowledgments}
This work is supported by the U.S. Department of Energy Basic Energy Sciences (BES) under grant number DE-SC0019215, and the National Science Foundation RAISE-QAC-QSA under grant number DMR-2037783.

\bibliography{main.bib}

\appendix

\onecolumngrid

\section{Further Notes on Manifold Optimization}
In this section, we provide a few more details related to manifold optimization and the representation of tangent vectors using their actions. Additionally, we show derivations of the Riemannian gradient and Hessian expressions for the Stiefel and Grassmannian manifolds. 

\subsection{Alternate Representations of Grassmannian Tangent Vectors}
Let $Z$ be a tangent vector such that $Z=AX$. Then, we have $B = ZX^T = AXX^T$, which naturally leads to the assignment $L_X(Z) = B - B^T$. We then ask when does it hold that
\begin{equation}
(A - A^T )X = Z
\end{equation}
and we can replace the left action $L_X(AX)$ with $A - A^T$? We begin by decomposing a matrix $O$ into the projection on $X$ and its complement, via the subspace resolution of the identity $I_n = \X + \X_\perp$. Associating the vectors of $X$ with the standard basis, we can represent $O$ as a block-diagonal matrix (with block dimensions $p$ and $n-p$):
\begin{align}
A &= \begin{pmatrix}
\X O \X & \X O \X_\perp \\
\X_\perp O \X & \X_\perp O \X_\perp
\end{pmatrix}.  
\end{align}
The block diagonal form of $B$ can be expressed as:
\begin{equation}
B = \begin{pmatrix}
0 & 0 \\
\X_\perp B \X & 0
\end{pmatrix}.
\end{equation}
which yields a left action $L_X(V)$:
\begin{equation}
L_X(V) = B - B^T = \begin{pmatrix}
0 & -\X B^T \X_\perp \\
\X_\perp B \X & 0 
\end{pmatrix}.
 \end{equation}

Regarding $A$, we then have the following result:

\begin{theorem} Given a factorization $Z=AX$ when $Z \in T_X \Gr$, the matrix $A - A^T$ serves as a suitable left-action if and only if $\X A = 0$. Equivalence of inner products holds when $\X_\perp (A-A^T) \X_\perp =0$. 
\end{theorem}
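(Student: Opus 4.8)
The plan is to handle the two assertions separately: the first is an algebraic identity for how $A - A^T$ acts on $X$, while the second is a statement about the trace metric and is most cleanly seen through the block decomposition introduced above.

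For the first claim, I note that any suitable left action must be skew-symmetric and must reproduce $Z$ when applied to $X$; skew-symmetry of $A - A^T$ is automatic, so the only content is the identity $(A - A^T)X = Z$. Expanding with $Z = AX$ gives $(A - A^T)X = Z - A^T X$, so the requirement is exactly $A^T X = 0$. To match the stated form I would invoke $X^T X = I_p$: transposing shows $A^T X = 0 \iff X^T A = 0$, and then $X^T A = 0 \iff \X A = XX^T A = 0$ (forward by left-multiplying $X^T A = 0$ by $X$, backward by left-multiplying $\X A = 0$ by $X^T$ and cancelling $X^T X = I_p$). This yields the equivalence in both directions.

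For the metric claim, the strategy is to compare the candidate $\tilde{L} := A - A^T$ with the canonical left action $L_X(Z) = ZX^T - XZ^T$ block by block relative to the splitting $I_n = \X + \X_\perp$. Under $\X A = 0$ the first block row of $A$ vanishes, so $\tilde{L}$ has zero $(\X,\X)$ block and off-diagonal blocks $\pm \X_\perp A \X$. Because $Z \in T_X\Gr$ satisfies $X^T Z = 0$, the canonical $L_X(Z)$ is purely block-off-diagonal and agrees with $\tilde{L}$ in those blocks; the sole discrepancy is the lower-right block $S := \X_\perp(A - A^T)\X_\perp$, so that $\tilde{L} = L_X(Z) + S$ with $S$ supported only on the $(\X_\perp, \X_\perp)$ block.

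Finally I would substitute into the Grassmannian metric of Lemma~\ref{lem:metric}, which reduces to $\langle Z, W\rangle = \tfrac12 \langle L_X(Z), L_X(W)\rangle$ since $A_X \equiv 0$ on $\Gr$. Writing $\tilde{L}_Z = L_X(Z) + S_Z$ and $\tilde{L}_W = L_X(W) + S_W$ and expanding the trace product bilinearly yields $\langle L_X(Z), L_X(W)\rangle + \langle S_Z, S_W\rangle$ together with two cross terms. The cross terms vanish because $L_X(Z)$ is supported on the off-diagonal blocks while $S_W$ is supported on the $(\X_\perp, \X_\perp)$ block, so the diagonal blocks of the relevant matrix product are zero and its trace vanishes. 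Hence the two metrics coincide precisely when $\langle S_Z, S_W\rangle = 0$ for all tangent pairs, and the hypothesis $\X_\perp(A - A^T)\X_\perp = 0$ forces $S = 0$, delivering the equivalence. I expect the main obstacle to be the bookkeeping in this last step: establishing the vanishing of the cross terms requires careful tracking of which blocks each factor occupies, and one must use the Grassmannian constraint $X^T Z = 0$ to guarantee that $L_X(Z)$ is purely off-diagonal so that the split $\tilde{L} = L_X(Z) + S$ is clean.
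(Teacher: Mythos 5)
Your proof is correct, and in both halves it is tighter than the paper's own argument. For the first claim, the paper proves the forward direction essentially as you do (from $\X A = 0$ it gets $A = \X_\perp A$, hence $A^T X = 0$), but it proves the converse by writing out the block decomposition of $(A-A^T)X$ with respect to $I_n = \X + \X_\perp$ and invoking the tangency condition $X^T A X = 0$ to kill the upper block; your converse is more elementary: $(A-A^T)X = AX$ forces $A^T X = 0$ directly, and the equivalence $A^T X = 0 \iff \X A = 0$ follows from $X^T X = I_p$, with no block bookkeeping and no use of tangency of $Z$ at all. For the second claim the paper offers only a one-sentence assertion ("those terms needing to be 0"), leaving the reader to reconstruct the argument from the surrounding block forms; your decomposition $\tilde{L} = L_X(Z) + S$ with $S = \X_\perp (A - A^T)\X_\perp$, together with the check that the cross terms $\Tr L_X(Z)^T S_W$ vanish by block support, supplies exactly the missing content, and it buys something extra: since $\la \tilde{L}_Z, \tilde{L}_W \ra = \la L_X(Z), L_X(W) \ra + \la S_Z, S_W \ra$, invoking Lemma~\ref{lem:metric} with $A_X \equiv 0$ and taking $W = Z$ shows the metric discrepancy is $\tfrac{1}{2}\| S_Z \|^2 \geq 0$, so the stated condition is not merely sufficient but necessary once equality is demanded for all tangent pairs. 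The paper's block machinery is not wasted --- it sets up the operator $J_X(A) = \X_\perp A - A^T \X_\perp$ reused throughout the appendix --- but as a proof of this particular theorem, your route is cleaner and more complete.
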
 
\begin{proof}
If $\X A =0$, then $A = \X_\perp A $ and $A^T X = A^T \X_\perp X = 0$. This implies that $(A-A^T)X = Z$. Conversely, if $ A - A^T = Z$, then we can write the block-diagonal form:
\begin{align}
(A - A^T)X &= \begin{pmatrix}
0_{p,p} & \X  (A-A^T) \X_\perp \\
\X_\perp  (A-A^T) \X & \X_\perp (A-A^T) \X_\perp 
\end{pmatrix}  \begin{pmatrix}
I_p  \\ 0_{n-p,p} 
\end{pmatrix} \\ &= \begin{pmatrix} 0_{p,p}  \\ \X_\perp (A-A^T) X
\end{pmatrix}  = \begin{pmatrix} 0_{p,p} \\
C - \X_\perp A^T X 
\end{pmatrix}
\end{align}
which implies that $\X_\perp A^T \X = 0$, which combined with $\X_\perp A^T \X=0$ fulfills $A^T  \X = 0 $. The requirement for the inner product is based on those terms needing to be 0 in order to recover the proper inner product. 
\end{proof}
The above theorem implies the following structure of $A$:
\begin{align}
A &= \begin{pmatrix}
0 & 0  \\
\X_\perp A \X & \X_\perp A \X_\perp
\end{pmatrix}.  
\end{align}
We can then use this to construct an affiliated function of a \emph{generic} $n \times n$ operator $A$ to use as a representation on $X$:
\begin{equation}
J_X(A) = \X_\perp A - A^T \X_\perp  
\end{equation}
or equivalently, 
\begin{equation}
J_X(A) = \begin{pmatrix}
0_{p,p} & -\X A^T\X_\perp \\
\X_\perp A \X & X_\perp (A-A^T) X_\perp
\end{pmatrix}
\end{equation}
The exact form of $J_X(A)$ can be found as follows:
\begin{align}
J_X(A) &= \X_\perp A - A^T \X_\perp   \\
&= [\X_\perp, \sy(A) ] + \{ \X_\perp, \sk(A) \} \\
&= [\sy(A) , \X] + 2~\sk(A) - \{ X, \sk(A) \}  
\end{align}
If we compare with the left action of $AX$, we have:
\begin{align}
L_X(P_X^\Gr(AX)) &= [\sy(A),\X]  + \sk(A)\X + \X \sk(A) - 2 X \sk (A) \X \\
&= [\sy(A),\X] + \X_\perp \sk(A) \X + \X \sk(A) \X_\perp   
\end{align}
The symmetric contribution is equivalent, and so it is simply a question of the skew-symmetric contribution. The $L_X(P_X^\St(AX))$ is designed in such as way that $X_\perp^T L X_\perp = 0$, whereas this is not the case for $J_X$, and this difference can be expressed as $\X_\perp A \X_\perp$. 

\subsection{Actions on the Stiefel Manifold}

Similar to the Grassmannian case, for a vector $Z \in T_X \St(n,p)$, we have according to the projection with respect to $X$:
\begin{equation}
B = Z X^T = \begin{pmatrix}
\X B \X  & 0_{p,n-p} \\
\X_\perp B \X & 0_{n-p,n-p}  
\end{pmatrix}
\end{equation}
Then, $B - B^T$ yields:
\begin{equation}
L_X(Z)  = \begin{pmatrix}
X (B - B^T) X^T  & -\X B^T \X_\perp \\
\X_\perp B \X & 0_{n-p,n-p}  
\end{pmatrix}
\end{equation}
which when matched with $R_X(Z)$ yields the tangent vector:
\begin{equation}
Z = L_X(Z) X - X R_X(Z).
\end{equation}

If we consider a form $Z = AX = XR + X_\perp C $, then we run into a similar problem as before. Namely, 
\begin{align}
A &= \begin{pmatrix}
X R X^T  & \X A \X_\perp    \\
\X_\perp C X^T & \X_\perp A \X_\perp
\end{pmatrix}
\end{align}
namely that $\X A \X_\perp$ is non-zero. We can again use the Grassmannian projection with $J$, and this more or less solves our problem. 
\begin{equation}
\X_\perp A = \begin{pmatrix}
0  & 0  \\
\X_\perp C X^T & \X_\perp A \X_\perp
\end{pmatrix}. 
\end{equation}
The resulting factorization is then:
\begin{equation}
Z = J_X(Z) X + X R_X(Z). 
\end{equation}

\subsection{Riemannian Metric for Tangent Space Vectors and Actions}
Here we expand on the proof in Lemma \eqref{lem:metric}. The Euclidean inner product of two tangent vectors of the Stiefel manifold is:
\begin{align}
\langle Z,  W \rangle &= \Tr C_Z^T C_W + \Tr A^T_Z A_W .
\end{align}
In earlier work the apparent double counting of the skew-symmetric degrees of freedom led to the canonical Stiefel manifold\cite{edelmanGeometryAlgorithmsOrthogonality1998a}, but for our purposes we use the metric associated with the Euclidean trace product. 
The matrices $L_X(V)$ and $J_X(Z^T X)$ have the following decompositions: 
\begin{align}
L_X(Z) &= X_\perp C X^T - X C^T X_\perp + 2 XAX^T ,\\
J_X(Z) &= X_\perp CX^T - X C^T X_\perp  .
\end{align}
For $Z,W \in \St(n,p)$, we have:
\begin{align}
\langle Z, W \rangle &= \Tr (X^T J_Z^T + A_Z^T X^T)(J_W X + X A_W) \\
&= \Tr J_Z^T J_W \X + A_X^T A_W  \\
&= \Tr J_Z^T J_W - X^T_\perp J_Z^T J_W X_\perp + A_X^T A_W \\
&= \Tr J_Z^T J_W - C_Z C_W^T + A_X^T A_W
\end{align}
implying that $\Tr J_Z^T J_W = 2 \Tr C_Z C_W^T$. For $L_X(V)$, we have:
\begin{align}
\begin{split} \langle L_X(V), L_X(W) \rangle &= \Tr (- X_\perp C_V X^T + X C^T_V X_\perp + 2 XA_Z^T X^T )\\
&~~~~~\times (X_\perp C_W X^T - X C_W^T X_\perp + 2 XA_WX^T )\end{split} \\
&= \Tr (C_V C_W^T + C_V^T C_W + 4 A_V^T A_W) \\
&= 2 \langle Z, W \rangle_X + 2\langle A_V, A_W\rangle_X 
\end{align}
Which yields:
\begin{equation}
\langle Z, W \rangle = \frac{1}{2} \langle L_X(V), L_X(W) \rangle - \langle A_X(V), A_X(W)\rangle 
\end{equation}

\subsection{Tangent Operator Identities}
In light of the previous sections, we derive some expressions which are useful in manipulating more complicated expressions.

Given that $L = B - B^T$, it follows that:
\begin{align}
\X L &= L \X_\perp \\
L \X &= \X_\perp L \\
L &= L\X + \X L  
\end{align}
Finally, we note two identities that the left action possess. 
\begin{align}
L &= XX^T L + L XX^T - XX^T L XX^T \\
J &= XX^T J + J XX^T \\
 A_X(Z) &= \frac{1}{2} X^T L_X(Z) X^T \\
 L_X(Z) &= J_X(ZX^T) + 2 X A_X(Z) X^T \\
0 &=  X^T J_X(Z X^T) X.
\end{align}
We utilize some of these in later expression, and note that they can be used for other derivations involving the Stiefel and Grassmannian manifolds as well. 

\subsection{Riemannian Hessian for the Subspace Problem}
Given that the Hessian of Eq.~\eqref{eq:prob_gr} is 
\begin{align}
{\rm Hess}f(X)[Z] &=  (H L_Z   - \X H L_Z - L_Z \X H )X  
\end{align}
we consider the perpendicular contribution of the left factorization $A_{\rm Hess}$, namely:
\begin{align}
\X_\perp A_{\rm Hess} \X_\perp &= \X_\perp (HL_Z - \X H L_Z - L_Z \X H) \X_\perp \\ &=  \X_\perp (HL_Z - L_ZH + \X L_Z H ) \X_\perp \\
&= \X_\perp [H,L_Z] \X_\perp 
\end{align}
which is a symmetric quantity and thus vanishes in $J_X$. Then, our expression of interest is:
\begin{align}
L_X(A_{\rm Hess}X) &= J_X(A_{\rm Hess}) \\
&= HL_Z + L_ZH - L_ZXH - HXL_Z - XHL_Z - L_ZHX \\
&= HL_ZX + XL_ZH - XHL_Z - L_ZHX \\
&= [[H,L_Z],X]. 
\end{align} 
Additionally, the same form can be derived using $L_X({\rm Hess})$ instead of $J_X$. 

\subsection{Riemannian Gradient and Derivative for the Stiefel Problem}
The gradient and Hessian of Stiefel manifold presents more involved derivations, and are detailed below. 
\subsubsection{Riemannian Gradient}

Given the eigenvalue equation Prob.~\eqref{eq:prob_st}, smooth extension of the gradient is:
\begin{equation}
{\rm grad} \bar{f}(X) = H X K.
\end{equation} 
The projection onto the Stiefel manifold is:
\begin{align}
{\rm grad} f(X) = HXK + X^T \sk (X^T HXK)
\end{align}
We can find $L_X$ and $J_X$ as:
\begin{align}
L_X({\rm grad} f(X)) &= L_X(HXK) \\
&= HXKX^T - X KX^T H \\
&= [H, \X_K]
\end{align}
\begin{align}
J_X(HXKX^T) &= X_\perp  HXKX^T - X KX^T H X_\perp \\
&=  [H, \X_K] - XX^T H X KX^T + XKX^T H XX^T \\
&= [H, \X_K] - X [X^T H X, K] X^T 
\end{align}
Alternatively, we also have:
\begin{align}
J_X(HXKX^T)&= H \X_K - \X_K H - \X H \X_K + \X_K H \X \\
&= [H,\X] \X_K  + \X_K [H,\X] \\
&= \{[H,\X],\X_K\} \\
&= \{G, X_K \} 
\end{align}
where $G$ is the Grassmannian gradient expression. The right action is given as:
\begin{equation}
A_X(HXK) = \sk(X^T HXK) = \frac{1}{2}[X^T H X, K]
\end{equation}
Here, $L_X(V)$ likely is easier to realize than $J_X(VX^T)$, which is reasonable given that $J$ requires the Grassmannian projection.
\subsubsection{Riemannian Hessian}\label{sec:stiefhess}
The Hessian of Prob.~\eqref{eq:prob_st} appears to have a more involved form, partially due to the parametrization of $V$ as well as the presence of the right actions. So far, we have not identified a computationally simple Hessian form, as in the Grassmannian case. For a tangent vector $V \in T_X \St(n,p)$, we can express the Hessian as:
\begin{equation}
{\rm Hess}~g(X)[V] = P_X^\St(HVK - V \sy X^T  H XK ) 
\end{equation}
Let $V =  L_X(V) X - XA_X(V)$. Denoting the left and right actions as $L_V$ and $A_V$ respectively, and letting $E$ be the subspace energy $E = X^T H X $, we have:
\begin{align}
A_X({\rm Hess}~g(X)[L_X(V) X]) &= \sk(X^T HVK - X^T V \sy (X^T  H XK )) \\
&= \sk(X^T H L_V X K  - X^T L_V X \sy (E K)) \\
&= \frac{1}{2} (X^T HL_V XK +  K X^T L_VH X) - \sk(A_V E K + A_V K E) 
\end{align}
\begin{align}
A_X({\rm Hess}~g(X)[XA_X(V)]) &= \sk(X^T HVK - X^T V \sy( X^T  H XK )) \\
&= \sk(X^T H XA_V K ) - \sk(A_V \sy (EK )) \\
&= \frac{1}{2} (E A_V K + K A_V E) - \frac{1}{4}\{ \{A_V, E \}, K\}  .
\end{align}

The most challenging part here is finding the contribution including $HL$, stemming from the skew symmetric portion of $V_L$. We can isolate this contribution in the tangent vector:
\begin{align}
X \sk(X^T HL_VXK)  &= \frac{1}{2}( XX^T HL X KX^T X + X KX^T LH X ) \\
&= \frac{1}{2} (\X HL \X_K + \X_K LH \X ) X,
\end{align}
though we do not find either form to be truly satisfactory. Looking at the left action of the Hessian, we find:
\begin{align}
L_X({\rm Hess}~g(X)[L_X(V)X]) &=   HL_V \X_K - L_V X \sy (X^T  H XK)X^T +  \X_K L_V H -  X \sy(X^T  H XK) X^T L_V \\ 
\begin{split} &=  [[H, L_V],\X_K] + L_V H \X_K + \X_K H L_V  - \frac{1}{2} \{L_V,(\X H \X_K+\X_K H \X) \}
\end{split}
\end{align}
The symmetric portion of $\X H \X_K$ has the following form:
\begin{align*}
\X H \X_K + \X_K H \X &= [\X,H] \X_K + \X_K [H,\X] + H \X_K + \X_K H \\
&= [\X_K, G] + \{ H, \X_K \} 
\end{align*}
where $G$ is the Grassmannian gradient, $G= [H, \X]$. Using this, we have:
\begin{align}
\begin{split} L_X({\rm Hess}~g(X)[L_X(V)X]) 
 &=  [[H, L_V],\X_K] + L_V H \X_K + \X_K H L_V  - \frac{1}{2} \{L_V,([\X_K, G] + \{ H, \X_K \} \} 
\end{split} \\
\begin{split} &=  [[H, L_V],\X_K] + L_V (H \X_K - \frac{1}{2}\{H,\X_K \} - \frac{1}{2}[\X_K, G]) \\ &~~~ + (\X_K H-\frac{1}{2}\{ H, \X_K \} - \frac{1}{2}[\X_K, G] )  L_V  
\end{split}
\\
&= [[H, L_V],\X_K] + \frac{1}{2}L_V( [H,\X_K]-[\X_K,G]) - \frac{1}{2} ([H,\X_K] + [\X_K, G]) L_V \\
&= [[H, L_V],\X_K] + \frac{1}{2} [L_V, G_K] - \frac{1}{2}\{L_V, [\X_K, G] \} 
\end{align} 
Finally, the right contribution on the Hessian yields:
\begin{align*}
L_X({\rm Hess}~g(X)[XA_X(V)]) &= H X A_V K X^T - X A_V (\sy X^T  H XK)X^T + X K A_V X^T H - X \sy(X^T H X K) A_V X^T \\
&= H X A_V K X^T + X K A_V X^T H - \frac{1}{2} X \{ A_V, \{ E, K \} \}  X^T   
\end{align*} 
The last portion of this can be factored as a right action:
\begin{align*}
 \frac{1}{2} X \{ A_V, \{ E, K \} \}  X^T   X  &= \frac{1}{2} X \{ A_V, \{ E, K \} \}.
\end{align*}
The total form of the Hessian for $f_S(X) = {\rm Tr} X^T H X K$ is then given by:
\begin{equation}
\begin{split}{\rm Hess} f_S(X)[V] &= \Big( [[H, L_V],\X_K] + \frac{1}{2} [L_V, G_K] - \frac{1}{2}\{L_V, [\X_K, G] \}+ H X A_V K X^T + X K A_V X^T H \Big)X \\
&~~ -X \Big(\sk( K X^T HL X) - \frac{3}{4} \{ A_V, \{ E, K \} \} -\frac{1}{2}(E A_V K + K A_V E ) \Big) 
\end{split}
\end{equation}
We suspect that a more compact form of the Hessian exists for $f_S$, though our work indicates the Grassmannian manifold is likely more suitable for near-term oriented applications. Additionally, further problems regarding the complexity of the operator forms need to be addressed for the Stiefel manifold.
\section{Iterative Grassmannian and Stiefel Optimization}
Here, we provide a further explanation of the iterative Grassmannian optimization approach which allows us to prepare up to $p-$distinct eigenvectors. Let $p$ denote the target number of states. For the zeroth iteration, we perform an optimization over $\Gr(n,p)$ with initial state $X$. There are two cases we consider. In the first, we assume that it is easy to implement the retraction different to multiple sets of vectors. The second considers when this is not possible. 

\textbf{Iterative Grassmannian Optimization.}
If our retraction can be applied to vectors of $X_\perp$ as well as $X$, then we consider a retraction on $H$ as a retraction acting on $H$. That is, given $Y = R_X(V)$ (representing potentially several retractions), we can write:
\begin{align*}
Y^T H Y &= X^T R_X(V)^{-1} H R_X(V) X \\
&= X^T H^0 X 
\end{align*}
where $H^0$ is block diagonal:
\begin{equation}
H^0 = \begin{pmatrix}
\X_\perp H^0 \X_\perp & 0 \\
0 & \X H^0 \X 
\end{pmatrix}. 
\end{equation}
Note, for the retractions generated by skew-symmetric matrices considered in this work, which are applied to the entire $n$-dimensional system, this can easily be done.

Now, we focus on the diagonalization of the $p \times p$ block, $\X H^0 \X$. Assuming $p$ is a power of 2, we need a sequence $\mathcal{S}$ of $\log_2 p$ partitions, where for each partition, exactly half of the vectors can be found in every previous sequence. The set of length $\log_2 p$ bits where the $k-$th element is 1, denoted as $P_k$, give such a sequence. For example, given $p=8$, we have $P_1 = \{100, 101, 110, 111\}$, $P_2 = \{010, 011, 110, 111 \}$, and $P_3 = \{001, 011, 101, 111 \}$. $P_2$ shares $110$ and $111$ with $P_1$, and $P_3$ shares $101$ and $111$, and $011$ and $111$ with $P_1$ and $P_2$ respectively. The sequence then is $\mathcal{S} = (P_1, P_2, P_3)$. 

We choose $X^1$ to consist of vectors with indices from our set $P_1$, and then we perform the Grassmannian optimization on $\Gr(n,p/2)$. The resulting retraction $Y^1 = R_{X^1}(V^1)$, yields:
\begin{align*}
(Y^1)^T H^0 Y^1 &= (X^1)^T (R_{X^1}(V)^{-1} H^0 R_{X^1}(V)) X^1 \\
&= X^T H^1 X .
\end{align*}
Because $H^0$ was block diagonal, the only coupling of vectors was within the $p \times p$ block, and so we have effectively block diagonalized our matrix again:
\begin{equation}
H^1 = \begin{pmatrix}
\X_\perp H^0 \X_\perp & 0 & 0 \\
0 & \X^1_\perp H^1 \X^1_\perp & 0 \\
0 & 0 & \X^1 H^1 \X^1 
\end{pmatrix}. 
\end{equation}
where $\X^{(1)} = X^{(1)} (X^{(1)})^T$ and $\X_\perp^{(1)} = I_p - \X^{(1)}$. In the next iteration, we take $X^{(2)}$ to be vectors with indices given by $P_2$, and by our choice of sequences, each block will be split into two blocks until only a single element results. The result is a complete diagonalization of $H$. 

\textbf{Grassmannian Optimization through Stiefel Optimization.}
If the retraction \emph{cannot} easily be applied to vectors of $X_\perp$ as well as $X$, then we still begin with an optiimzation over $\Gr(n,p)$ to produce $Y = R_X(V)$, and we still have effectively block-diagonalized our matrix. 

Next, we perform a Stiefel optimization on the same set of vectors, with a particular $K$ matrix. In particular, let $K$ be defined as: 
\begin{equation}
(K_0)_{ii} = \begin{cases} +1~{\rm if}~i \in P_0 \\ -1~{\rm else}
\end{cases}
\end{equation}
which corresponds to the unitary Pauli matrix $K_j = -Z_j$. Minimization over $\St(n,p)$ here corresponds to the minimization of $p/2$ vectors and the maximization of $p/2$ vectors \emph{within} the $p$-subspace, resulting in the same block diagonalization as in the Grassmannian case.

This yields a simpler form of the Grassmannian minimization, involving only the left action. Letting $X_S$ denote elements of the Stiefel manifold, we have:
\begin{equation}
X_S^T L_{X_S}(V) X_S = X_S^T  V - V^T X_S^T = L_{X_S}
\end{equation}
which according to Eq.~\eqref{eq:left_right}, implies that we only need the left action or right action of the state to implement the tangent vector. Thus, it is possible that given some sparse $L_{X_S}$, we can avoid evaluating the subspace energy matrix.  

The primary advantage here over a traditional Stiefel optimization is that the norm of $K$ is well defined (i.e. we do not need a range of exponential scaling values of $p$), and it also result in simpler Riemannian gradient and Hessian forms based on the block diagonal structure of $E$ (which become more sparse). Regardless, further scaling problems do exist, and will be the subject of further work. 

\textbf{Challenges in Implementation}. There are several limitations in this approach, notably that the condition of block diagonalization assumes that it is effectively carried out to a high precision, which is not amenable for current quantum devices. In the case that a large threshold is used, projections would have to be implemented to maintain the pseudo-block diagonal forms, which could potentially accumulate errors. Also, it is not necessarily the case that this method scales more favorably than the above methods, depending on the complexity of the Grassmannian optimization.

\end{document}